\UseRawInputEncoding
\documentclass[pra,aps,reprint,superscriptaddress,showpacs,showkeys,nofootinbib]{revtex4-1}
\usepackage{amsmath,amsthm,amssymb,graphicx,subfigure,xcolor}
\usepackage{graphics,float}
\usepackage{color}
\usepackage[unicode=true]{hyperref}
\hypersetup{
     colorlinks=true,       		
     linkcolor=blue,          	
     citecolor=red,            
     urlcolor=magenta,           	
 }
\newtheorem{theorem}{Theorem}
\newtheorem*{theorem*}{Theorem}
\newtheorem{corollary}{Corollary}
\newtheorem*{corollary*}{Corollary}
\newtheorem{lemma}{Lemma}
\newtheorem{example}{Example}
\newtheorem*{lemma*}{Lemma}

\newtheorem*{proposition*}{Proposition}
\theoremstyle{definition}
\newtheorem{definition}{Definition}
\newtheorem*{definition*}{Definition}
\theoremstyle{remark}

\newtheorem*{remark*}{Remark}



\begin{document}

\title{Quantum information masking of an arbitrary qudit can be realized in  multipartite lower dimensional systems}
\author{Wei-Min Shang}
\affiliation{School of Science, Tianjin Chengjian University, Tianjin 300384, People's Republic of China}
\affiliation{Theoretical Physics Division, Chern Institute of Mathematics, Nankai University, Tianjin 300071, People's Republic of China}

\author{Xing-Yan Fan}
\affiliation{Theoretical Physics Division, Chern Institute of Mathematics, Nankai University, Tianjin 300071, People's Republic of China}

\author{Fu-Lin Zhang}
\email{flzhang@tju.edu.cn}
\affiliation{Department of Physics, School of science, Tianjin University, Tianjin 300072, People's Republic of China}

\author{Jing-Ling Chen}
\email{chenjl@nankai.edu.cn}
\affiliation{Theoretical Physics Division, Chern Institute of Mathematics, Nankai University, Tianjin 300071, People's Republic of China}

\date{\today}

\begin{abstract}
Quantum information masking is a protocol that hides the original quantum information from subsystems and spreads it over quantum correlation, which is available to multipartite except bipartite systems. In this work, we explicitly study the quantum information masking in multipartite scenario and prove that all the $k$-level quantum states can be masked into a $m$-qudit systems ($m\geq4$) whose local dimension $d\leq k$ and the upper bound of $k$ is tighter than the quantum Singleton bound. In order to observe the masking process intuitively, explicitly controlled operations are provided. Our scheme well demonstrates the abundance of quantum correlation between multipartite quantum system and has potential application in the security of quantum information processing.

\end{abstract}
\pacs{03.67.-a, 03.67.Hk, 03.65.Ta}


\maketitle
\section{Introduction}
In quantum information theory, since the information processing process adheres to the unitary evolution and linear superposition principle, several operations that can be complemented in the classical information process are prohibited in a closed physical system. The notion that reveals these phenomenons is called the ``no-go" theorem. For instance, there exists no universal cloning machine that can replicate an arbitrary unknown pure quantum state, which is known as the no-cloning theorem\cite{WW,D,HY}. A contrary version of the no-cloning theorem states that it is impossible to delete one of two copied arbitrary unknown quantum states without affecting the other in a closed physical system which resulting in no-deleting theorem\cite{AS}. With the in-depth study of quantum information theory, more and more no-go theorems have been mooted like no-broadcasting theorem\cite{HCC,AI}, no-superposition theory\cite{MAM,S,MFV}, no-hiding theory\cite{SA}. These theorems shed light on the discrepancies between quantum mechanics and classical physics from the perspective of information theory, and also root in the security of quantum information processing tasks such as quantum secret sharing\cite{MVA,RD,ZYZ}, quantum key distribution\cite{C,NG} and quantum teleportation\cite{CG,LZ,DP}, et al.

In 2018,  a new no-go theorem named no-masking theorem was proposed by  Kavan Modi et al. which states that it's impossible to hide the original arbitrary unknown quantum states into the quantum correlation between bipartite quantum systems and make it inaccessible to marginal systems\cite{KA}. Moreover, this result not only provides a broader vision of the quantum bit commitment named quantum qubit commitment\cite{D,HH} but also demonstrates its potential application in quantum secret sharing.

This topic attracted wide attention and a series of significant discussions were published.  When the masked set of states is all the states in a quantum system, there is evidence that the no-masking theorem is robust, that is,  even if replacing unitary transformations with general linear transformations, it's impossible probabilistically to mask arbitrary unknown quantum state into a bipartite quantum system\cite{MK}. Interesting, no-masking theorem is available even the set of states is reduced to the set of nonzero measure\cite{XBS}. This finding further enriches the no-masking theorem. Similar to the no-cloning theorem, when the masked set consists of mutual orthogonal (or linearly independent) states, it can be deterministically (or probabilistically) masked into bipartite systems\cite{BS}. Different from the no-cloning theorem, the maskable set of states is abundant. In geometric, the maskable set of states in high dimensional systems is described by the one on two or more hyperdisks\cite{DH}. This result could simplified to qubit case, where the maskable states located on a spherical circle on the Bloch sphere which has been verified experimentally\cite{XB,ZXK}. In addition, there exists a masker for the quantum information encoding into the states described by real destiny matrices\cite{RZZ,ZHJ}.

The above conclusions well describe the situation of quantum information masking in a bipartite system. How does it behave in a multipartite systems? Li and Wang proved that it's possible to mask all the states of a quantum system into a multipartite systems and observe no information in each local system\cite{MY}. Meanwhile, several schemes were proposed, namely, $\mathbb{C}^{d}\rightarrow (\mathbb{C}^{d})^{\otimes 2d}$, $\mathbb{C}^{d}\rightarrow \mathbb{C}^{d}\otimes\mathbb{C}^{d}\otimes\mathbb{C}^{d}$ when $d$ is odd and $\mathbb{C}^{d}\rightarrow \mathbb{C}^{d+1}\otimes\mathbb{C}^{d+1}\otimes\mathbb{C}^{d+1}$ when $d$ is even respectively. An operation like this that can mask all quantum states of a system is called an universal masker. This finding opens up a new direction for quantum information masking. Recently, it's upgraded version named $k$-uniform quantum information masking was proposed which reveals the strong relationship between quantum information masking and quantum error-correcting codes (QECCs)in heterogeneous systems\cite{FML}. A similar topics also were discussed \cite{KZH,WFC}. In addition, the close relationship between quantum multipartite masker and quantum teleportation was studied and a new masking scheme ($\mathbb{C}^{d}\rightarrow \mathbb{C}^{d}\otimes\mathbb{C}^{d}\otimes\mathbb{C}^{d}\otimes\mathbb{C}^{d}$) was provided\cite{WF}. These conclusions all convey a signal that quantum information masking has wide prospect in the security of quantum information process.

So far, the states in the maskable set have been masked into multipartite systems whose local dimension is not lower than that of the system in which they are located. Generally, the operations allowed are simpler in lower dimensional quantum systems especially in experiments. In this paper, we show that it's possible to mask arbitrary  unknown quantum states into a multipartite systems with lower local systems. To facilitate the understanding of the masking process,  vivid controlled operations in four-partite systems was also provided. In addition, we prove that the bound of the dimension of the maskable quantum states is tighter than the famous quantum Singleton bound. This reveals the difference between quantum information masking and quantum error-correcting codes.

\section{PRELIMINARY}
Before starting our main results, the concepts to be used later are sorted out. Firstly, let's have a briefly review of the quantum information masking in multipartite scenario\cite{MY}.
\begin{definition}
An operation S is a multipartite masker which mask the quantum information encoded in the set of states $\{|\psi_{i}\rangle\in H_{A_{1}}\}$ into the set of multipartite states $\{|\Psi_{i}\rangle\in \otimes_{l=1}^{N}H_{A_{l}}\}$ such that all the reductions to each local system of $|\Psi_{i}\rangle$ are identical; i,e., for any $l\in \{1,2,\cdots,N\}$ ,
\begin{equation}
\begin{aligned}
\rho_{A_{l}}={\rm Tr}_{A^{c}_{l}}|\Psi_{i}\rangle\langle\Psi_{i}|.
\end{aligned}
\end{equation}
contain no information about the value of  $i$. Where $A^{c}_{l}$ denotes the complementary set of the set $\{A_{l}\}$ for the set $B=\{A_{1},A_{2},\cdots,A_{N}\}$, i.e., $A^{c}_{l}=B\backslash \{A_{l}\}$.
\end{definition}

Then, a necessary definition is given as follow
\begin{definition}\label{max}
 The set of pure states $\{|\psi_{i}\rangle\}$ called a  "maximum entangled basis" (MEB) in N-qudit systems ($\mathbb{C}^{d})^{\otimes N}$, if it satisfies the conditions as follow

(i)~~For arbitrary pure state  $|\psi_{i}\rangle$, it's reductions to each party are
\begin{equation}
\begin{aligned}
\rho_{A_{l}}={\rm Tr}_{A^{c}_{l}}|\psi_{i}\rangle\langle\psi_{i}|=\frac{I}{d};
\end{aligned}
\end{equation}

(ii)~~The set of pure states $\{|\psi_{i}\rangle\}$ forms an orthonormal basis for the Hilbert space $(H^{d})^{\otimes N}$.
\end{definition}

In addition, to prove the existence of the universal multipartite masker which described by a unitary operator U , we notice the following fact\cite{LG}
\begin{lemma}\label{lemma}
 If two set of states $\{|\psi_{i}\rangle\}_{i=1}^{n}$ and $|\Psi_{i}\rangle_{i=1}^{n}$ satisfy the condition
\begin{equation}
\begin{aligned}
\langle\psi_{i}|\psi_{j}\rangle=\langle\Psi_{i}|\Psi_{j}\rangle,
\end{aligned}
\end{equation}
for $i,j=1,2,\cdots,n$, there always exists  a unitary operator $U$ to complete the following transformation
\begin{equation}
\begin{aligned}
U|\psi_{i}\rangle=|\Psi_{i}\rangle,
\end{aligned}
\end{equation}
where $i=1,2,\cdots,n$.
\end{lemma}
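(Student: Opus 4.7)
The plan is to construct the desired unitary $U$ in two stages: first define it as a linear isometry from $V=\mathrm{span}\{|\psi_i\rangle\}_{i=1}^{n}$ onto $W=\mathrm{span}\{|\Psi_i\rangle\}_{i=1}^{n}$, and then extend this isometry to a unitary on the whole Hilbert space. The hypothesis $\langle\psi_i|\psi_j\rangle=\langle\Psi_i|\Psi_j\rangle$ is saying precisely that the two sets of vectors have the same Gram matrix $G_{ij}$, and the Gram matrix is enough data to recover every inner product between linear combinations. So I would convert the problem into one of linear algebra with that identification.

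Concretely, I would try to set $U_0\bigl(\sum_i c_i|\psi_i\rangle\bigr)=\sum_i c_i|\Psi_i\rangle$ and first check that this prescription is well defined. If $\sum_i c_i|\psi_i\rangle=0$, then
\begin{equation}
\Bigl\|\sum_i c_i|\Psi_i\rangle\Bigr\|^{2}=\sum_{i,j}\bar c_i c_j\langle\Psi_i|\Psi_j\rangle=\sum_{i,j}\bar c_i c_j\langle\psi_i|\psi_j\rangle=\Bigl\|\sum_i c_i|\psi_i\rangle\Bigr\|^{2}=0,
\end{equation}
so $\sum_i c_i|\Psi_i\rangle=0$ as well and $U_0$ is unambiguously defined on $V$. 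Exactly the same computation with two independent coefficient vectors shows that $U_0$ preserves inner products, hence is an isometry $V\to W$; in particular $\dim V=\dim W$, since both equal the rank of $G$.

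Finally, I would extend $U_0$ to a unitary on the whole Hilbert space: pick any orthonormal basis of $V^{\perp}$ and any orthonormal basis of $W^{\perp}$ (these have the same cardinality once both sets of vectors are regarded as living in a common ambient space of sufficient dimension), and declare $U$ to send one basis to the other. The direct sum of $U_0$ with this auxiliary isometry is the desired unitary, and by construction it still satisfies $U|\psi_i\rangle=|\Psi_i\rangle$. The only mildly subtle point is the well-definedness step above, which is what linear independence would normally give for free and which here is supplied by the equality of Gram matrices; the rest of the argument is straightforward linear algebra. If one is worried that the input and output spaces could have different dimensions, it suffices either to interpret $U$ as an isometric embedding, or to pad the smaller factor with ancilla dimensions, neither of which arises in the multipartite masking applications of the paper.
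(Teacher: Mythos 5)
Your argument is correct and is the standard proof of this fact: equal Gram matrices make the map $\sum_i c_i|\psi_i\rangle\mapsto\sum_i c_i|\Psi_i\rangle$ a well-defined isometry between the spans, which then extends to a unitary (or, when the target Hilbert space is larger as in the masking application, to an isometry after padding with ancillas). The paper itself gives no proof of this lemma and simply cites Duan and Guo, so there is nothing to compare against; your write-up supplies exactly the argument that reference relies on, including the well-definedness check that is the only nontrivial point.
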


\section{multi-qubit masker}
As is shown in Re.\cite{MY}, arbitrary unknown qubit states $|\psi\rangle=a_{0}|0\rangle+a_{1}|1\rangle$ can be deterministically masked into 4-qubit systems which reads as
\begin{equation}
\begin{aligned}
&|0\rangle\rightarrow \frac{1}{2}(|00\rangle+|11\rangle)\otimes(|00\rangle+|11\rangle),\\
&|1\rangle\rightarrow \frac{1}{2}(|00\rangle-|11\rangle)\otimes(|00\rangle-|11\rangle).\\
\end{aligned}
\end{equation}

Nevertheless, we find the masking capacity of 4-qubit systems is not limited to this.  Next, we give an example to illustrate that it is possible to mask all the states of 4-level quantum system into a 4-qubit system.

\begin{example}
For arbitrary unknown quantum states $|\phi\rangle$ in the system $\mathbb{C}^{4}$, namely
\begin{equation}\label{qu4it}
\begin{aligned}
|\phi\rangle=a_{0}|0\rangle+a_{1}|1\rangle+a_{2}|2\rangle+a_{3}|3\rangle,
\end{aligned}
\end{equation}
where $|a_{0}|^{2}+|a_{1}|^{2}+|a_{2}|^{2}+|a_{3}|^{2}=1$.
According to the lemma $\ref{lemma}$, there exists a masking processing defined by
\begin{equation}
\begin{aligned}
&|0\rangle\rightarrow \frac{1}{2}(|00\rangle+|11\rangle)\otimes (|00\rangle+|11\rangle),\\
&|1\rangle\rightarrow \frac{1}{2}(|00\rangle-|11\rangle)\otimes (|00\rangle-|11\rangle),\\
&|2\rangle\rightarrow \frac{1}{2}(|01\rangle+|10\rangle)\otimes (|01\rangle+|10\rangle),\\
&|3\rangle\rightarrow \frac{1}{2}(|01\rangle-|10\rangle)\otimes (|01\rangle-|10\rangle).\\
\end{aligned}
\end{equation}
\end{example}
\begin{proof}
The total final states in 4-qubit system is
\begin{equation}
\begin{aligned}
|\Phi\rangle=&\frac{1}{2}(a_{0}+a_{1})(|0000\rangle+|1111\rangle)\\
            &+(a_{0}-a_{1})(|0011\rangle+|1100\rangle)\\
            &+(a_{2}+a_{2})(|0101\rangle+|1010\rangle)\\
            &+(a_{2}-a_{3})(|0110\rangle+|1001\rangle).\\
\end{aligned}
\end{equation}
After calculation, the reduction to each local system is derived as follow
\begin{equation}
\begin{aligned}
\rho_{i}=&\frac{1}{4}(|a_{0}+a_{1}|^{2}+|a_{0}-a_{1}|^{2}\\
&+|a_{2}+a_{3}|^{2}+|a_{2}-a_{3}|^{2})I\\
=&\frac{I}{2},
\end{aligned}
\end{equation}
for $i=1,2,3,4$. That is, we observe no information from  each local systems of 4-qubit systems which reads that there exists a universal masker for $\mathbb{C}^{4}\rightarrow \mathbb{C}^{2}\otimes \mathbb{C}^{2}\otimes \mathbb{C}^{2}\otimes \mathbb{C}^{2}$.
\end{proof}

Next, in order to reveal the masking details of the processing above, the universal masker described by controlled operation is gave as follow.

For arbitrary unknown quantum states $|\phi\rangle$ in the system $\mathbb{C}^{4}$ given as (\ref{qu4it}), according to lemma $\ref{lemma}$, we can complete the unitary transformation reads as
\begin{equation}
\begin{aligned}
&|0\rangle\rightarrow |00\rangle_{12},~|1\rangle\rightarrow |10\rangle_{12},\\
&|2\rangle\rightarrow |01\rangle_{12},~|3\rangle\rightarrow |11\rangle_{12}.\\
\end{aligned}
\end{equation}
Then the total states in 2-qubit systems is obtained as
\begin{equation}
\begin{aligned}
|\Phi_{0}\rangle=(a_{0}|00\rangle+a_{1}|10\rangle+a_{2}|01\rangle+a_{3}|11\rangle)_{12}.
\end{aligned}
\end{equation}
The marginal states of $|\Phi_{1}\rangle$ are
\begin{equation}
\begin{aligned}
\rho_{1}=\left(
  \begin{array}{cc}
    |a_{0}|^{2}+|a_{2}|^{2} & a_{0}a_{1}^{*}+a_{2}a_{3}^{*} \\
    a_{0}^{*}a_{1}+a_{2}^{*}a_{3} &  |a_{1}|^{2}+|a_{3}|^{2} \\
  \end{array}
\right),\\
\rho_{2}=\left(
  \begin{array}{cc}
    |a_{0}|^{2}+|a_{1}|^{2} & a_{0}a_{2}^{*}+a_{1}a_{3}^{*} \\
    a_{0}^{*}a_{2}+a_{1}^{*}a_{3} &  |a_{2}|^{2}+|a_{3}|^{2} \\
  \end{array}
\right).
\end{aligned}
\end{equation}
Now, it's obvious that each marginal system contains the original quantum information. Then we add two auxiliary systems and proceed in four steps.

First, we do the C-Not $C_{13}$ on $|\Phi_{1}\rangle$, then the total state in 4-qubit systems is
\begin{equation}
\begin{aligned}
|\Phi_{1}\rangle=(a_{0}|000\rangle+a_{1}|101\rangle+a_{2}|010\rangle+a_{3}|111\rangle)_{123}|0\rangle_{4}.
\end{aligned}
\end{equation}
Now, the reduced density matrices of the first three local systems are derived as follows
\begin{equation}
\begin{aligned}
\rho_{1}&=\rho_{3}=\left(
  \begin{array}{cc}
    |a_{0}|^{2}+|a_{2}|^{2} & 0 \\
    0 &  |a_{1}|^{2}+|a_{3}|^{2} \\
  \end{array}
\right),\\
\rho_{2}&=\left(
  \begin{array}{cc}
    |a_{0}|^{2}+|a_{1}|^{2} & a_{0}a_{2}^{*}+a_{1}a_{3}^{*} \\
    a_{0}^{*}a_{2}+a_{1}^{*}a_{3} &  |a_{2}|^{2}+|a_{3}|^{2} \\
  \end{array}
\right).\\
\end{aligned}
\end{equation}
It is apparent that the original information contained in the off-diagonal elements of the reduced density matrices of  1 and 3 systems is masked.

Second, we do the C-Not operation $C_{24}$ on the total state $|\Phi_{2}\rangle$ and obtain the state as follow
\begin{equation}
\begin{aligned}
|\Phi_{2}\rangle_{1234}=a_{0}|0000\rangle+a_{1}|1010\rangle+a_{2}|0101\rangle+a_{3}|1111\rangle.
\end{aligned}
\end{equation}
After calculation, we can have the reduction to each local system as follow
\begin{equation}
\begin{aligned}
\rho_{1}=\rho_{3}=\left(
  \begin{array}{cc}
    |a_{0}|^{2}+|a_{2}|^{2} & 0 \\
    0 &  |a_{1}|^{2}+|a_{3}|^{2} \\
  \end{array}
\right),\\
\rho_{2}=\rho_{4}=\left(
  \begin{array}{cc}
    |a_{0}|^{2}+|a_{1}|^{2} & 0\\
    0 &  |a_{2}|^{2}+|a_{3}|^{2} \\
  \end{array}
\right).\\
\end{aligned}
\end{equation}
It can be seen that all the original information contained in the off-diagonal elements of the reduced density matrices of four local systems are vanished.

Third, we send 1-particle through  Hadamard gate, then do the C-Not $C_{12}$ and new total state becomes
\begin{equation}
\begin{aligned}
|\Phi_{3}\rangle_{1234}=&\frac{1}{\sqrt{2}}[a_{0}(|0000\rangle+|1100\rangle)+a_{1}(|0010\rangle-|1110\rangle)\\
                        &+a_{2}(|0101\rangle+|1001\rangle)+a_{3}(|0111\rangle-|1011\rangle)].
\end{aligned}
\end{equation}
Now, four reduced density matrices can be obtained as
\begin{equation}
\begin{aligned}
&\rho_{1}=\rho_{2}=\frac{1}{2}I,\\
&\rho_{3}=\left(
  \begin{array}{cc}
    |a_{0}|^{2}+|a_{2}|^{2} & 0 \\
    0 &  |a_{1}|^{2}+|a_{3}|^{2} \\
  \end{array}
\right),\\
&\rho_{4}=\left(
  \begin{array}{cc}
    |a_{0}|^{2}+|a_{1}|^{2} & 0\\
    0 &  |a_{2}|^{2}+|a_{3}|^{2} \\
  \end{array}
\right).\\
\end{aligned}
\end{equation}
It is distinct that we have no information about the original states in both 1, 2  systems and there is still some information leaked on the diagonal elements of 3,4 system.

Finally,  we send 3-particle  through  Hadamard gate, then perform the C-Not operation $C_{34}$ on $|\Phi_{4}\rangle$ and obtain the final total state
\begin{equation}\label{516}
\begin{aligned}
|\Phi_{4}\rangle_{1234}=&\frac{1}{2}[a_{0}(|0000\rangle+|1111\rangle+|0011\rangle+|1100\rangle)\\
                       &+a_{1}(|0000\rangle+|1111\rangle-|0011\rangle-|1100\rangle)\\
                       &+a_{2}(|0101\rangle+|1010\rangle+|0110\rangle+|1001\rangle)\\
                       &+a_{3}(|0101\rangle+|1010\rangle-|0110\rangle-|1001\rangle)].
\end{aligned}
\end{equation}

It is easy to verify that  $\rho_{i}=\frac{I}{2}, i=1,2,3,4$,  which means we complete the masking process $\mathbb{C}^{4}\rightarrow \mathbb{C}^{2}\otimes \mathbb{C}^{2}\otimes \mathbb{C}^{2}\otimes \mathbb{C}^{2}$.  All the C-Not operations above have the form
\begin{equation}
\begin{aligned}
C_{s,t}=|0\rangle\langle0|\otimes I+|1\rangle\langle1|\otimes \sigma_{x},
\end{aligned}
\end{equation}
where $s=1,3,t=2,4.$
It should note that (\ref{516}) is equivalent to the state given as
\begin{equation}
\begin{aligned}
|\Phi\rangle_{1234}=\frac{1}{2}[&a_{0}(|00\rangle+|11\rangle)\otimes (|00\rangle+|11\rangle)\\
                       &+a_{1}(|00\rangle-|11\rangle)\otimes (|00\rangle-|11\rangle)\\
                       &+a_{2}(|01\rangle+|10\rangle)\otimes (|01\rangle+|10\rangle)\\
                       &+a_{1}(|01\rangle-|10\rangle)\otimes (|01\rangle-|10\rangle)].
\end{aligned}
\end{equation}

In order to further explore the masking capacity of multi-qubit systems we give another example
\begin{example}
 For arbitrary unknown quantum states $|\phi\rangle$ in the system $\mathbb{C}^{8}$
\begin{equation}
\begin{aligned}
|\phi\rangle=\sum_{k=0}^{7}a_{k}|k\rangle.
\end{aligned}
\end{equation}
where $\sum_{k=0}^{7}|a_{k}|^{2}=1$.

According to the lemma $\ref{lemma}$, we can complete the deterministic masking process as follow
\begin{equation}
\begin{aligned}
&|0\rangle\rightarrow \frac{1}{2}(|000\rangle+|111\rangle)\otimes (|000\rangle+|111\rangle),\\
&|1\rangle\rightarrow \frac{1}{2}(|001\rangle+|110\rangle)\otimes (|001\rangle+|110\rangle),\\
&|2\rangle\rightarrow \frac{1}{2}(|010\rangle+|101\rangle)\otimes (|010\rangle+|101\rangle),\\
&|3\rangle\rightarrow \frac{1}{2}(|100\rangle+|011\rangle)\otimes (|100\rangle+|011\rangle),\\
&|4\rangle\rightarrow \frac{1}{2}(|000\rangle-|111\rangle)\otimes (|000\rangle-|111\rangle),\\
&|5\rangle\rightarrow \frac{1}{2}(|001\rangle-|110\rangle)\otimes (|001\rangle-|110\rangle),\\
&|6\rangle\rightarrow \frac{1}{2}(|010\rangle-|101\rangle)\otimes (|010\rangle-|101\rangle),\\
&|7\rangle\rightarrow \frac{1}{2}(|100\rangle-|011\rangle)\otimes (|100\rangle-|011\rangle).\\
\end{aligned}
\end{equation}
\end{example}

\begin{proof}
After the unitary transformation, the total state becomes
\begin{equation}
\begin{aligned}
|\Phi\rangle=&\frac{1}{2}[(a_{0}+a_{4})(|000000\rangle+|111111\rangle)\\
            &+(a_{0}-a_{4})(|000111\rangle+|111000\rangle)\\
            &+(a_{1}+a_{5})(|001001\rangle+|110110\rangle)\\
            &+(a_{1}-a_{5})(|110001\rangle+|001110\rangle)\\
            &+(a_{2}+a_{6})(|010010\rangle+|101101\rangle)\\
            &+(a_{2}-a_{6})(|010101\rangle+|101010\rangle)\\
            &+(a_{3}-a_{7})(|100100\rangle+|011011\rangle)\\
            &+(a_{3}-a_{7})(|100011\rangle+|011100\rangle)].\\
\end{aligned}
\end{equation}
We can verify  the reduced  density matrices of each local system by calculation as
\begin{equation}
\begin{aligned}
\rho_{i}=&\frac{1}{4}(|a_{0}+a_{4}|^{2}+|a_{0}-a_{4}|^{2}+|a_{1}+a_{5}|^{2}\\
        &+|a_{1}-a_{5}|^{2}+|a_{2}+a_{6}|^{2}+|a_{2}-a_{6}|^{2}\\
        &+|a_{3}+a_{7}|^{2}+|a_{3}-a_{7}|^{2})I\\
        =&\frac{I}{2}.
\end{aligned}
\end{equation}
where $i=1,2,3,4,5,6$.

Obviously, we can't observe no information about the original states $|\phi\rangle$ which means we complete the masking process $\mathbb{C}^{8}\rightarrow (\mathbb{C}^{2})^{\otimes 6}$.
\end{proof}

From the facts of two examples exhibited above, it is clear that the maximum dimension of the system in which the masked states located is unique up to the number of  MEB of  n-qubit systems. Thus we can derive the following corollary .
\begin{corollary}
For arbitrary unknown quantum states $|\phi\rangle$ in the system $\mathbb{C}^{d}$
\begin{equation}
\begin{aligned}
|\phi\rangle=\sum_{k=0}^{d-1}a_{k}|k\rangle,
\end{aligned}
\end{equation}
If the dimension of the system $\mathbb{C}^{d}$ satisfys
\begin{equation}
\begin{aligned}
d\leq 2^{n},
\end{aligned}
\end{equation}
there always exists a universal masker, which can mask all the states of $\mathbb{C}^{d}$ into 2n-qubit systems deterministically, namely $\mathbb{C}^{d}\rightarrow (\mathbb{C}^{2})^{\otimes 2n}$.
\end{corollary}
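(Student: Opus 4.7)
The plan is to abstract the common pattern behind Examples 1 and 2. Both examples send a computational basis vector $|k\rangle$ of $\mathbb{C}^{d}$ to a product $|\psi_{k}\rangle_{A}\otimes|\psi_{k}\rangle_{B}$, where $\{|\psi_{k}\rangle\}_{k=0}^{2^{n}-1}$ is a maximum entangled basis (MEB in the sense of Definition \ref{max}) of an $n$-qubit register and $A,B$ are two copies of that register. So the main work is to exhibit such an MEB for every $n\ge 2$; the masker is then manufactured by Lemma \ref{lemma}, and Definition 1 is verified by an elementary partial trace.

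For the MEB I would use a GHZ-type ansatz. For each $s=s_{1}\cdots s_{n-1}\in\{0,1\}^{n-1}$ and each sign bit $\epsilon\in\{0,1\}$ set
\begin{equation*}
|\psi_{s,\epsilon}\rangle=\tfrac{1}{\sqrt{2}}\bigl(|s\,0\rangle+(-1)^{\epsilon}|\bar{s}\,1\rangle\bigr),
\end{equation*}
where $\bar{s}$ is the bitwise complement of $s$. This produces exactly $2^{n}$ vectors. Orthonormality is routine: different $s$'s use disjoint pairs of computational strings, while same $s$ different $\epsilon$ differ only by a relative sign. Because the two computational strings in any $|\psi_{s,\epsilon}\rangle$ differ in every bit, tracing out all but the $i$-th qubit yields $\tfrac{1}{2}(|0\rangle\langle 0|+|1\rangle\langle 1|)=I/2$, so every single-qubit marginal is maximally mixed. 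This reproduces the Bell basis at $n=2$ and the GHZ-type basis at $n=3$ that appear in the two worked examples.

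Next, enumerate the basis as $\{|\psi_{k}\rangle\}_{k=0}^{2^{n}-1}$ and, for $d\le 2^{n}$, send $|k\rangle\mapsto|\psi_{k}\rangle_{A}|\psi_{k}\rangle_{B}$ for $k=0,\dots,d-1$. The identity $\langle\psi_{k}|\psi_{l}\rangle_{A}\langle\psi_{k}|\psi_{l}\rangle_{B}=\delta_{kl}=\langle k|l\rangle$ together with Lemma \ref{lemma}, after padding with an input ancilla $|0\rangle^{\otimes(2n-\lceil\log_{2}d\rceil)}$, promotes this isometry to a unitary $U$ on $(\mathbb{C}^{2})^{\otimes 2n}$. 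For $|\phi\rangle=\sum_{k}a_{k}|k\rangle$ the image is $|\Phi\rangle=\sum_{k}a_{k}|\psi_{k}\rangle_{A}|\psi_{k}\rangle_{B}$, and for any qubit $i$ in block $A$ the partial trace over block $B$ annihilates all $k\ne l$ cross terms via $\langle\psi_{l}|\psi_{k}\rangle_{B}=\delta_{kl}$, leaving
\begin{equation*}
\rho_{i}=\sum_{k}|a_{k}|^{2}\,\mathrm{Tr}_{A\setminus\{i\}}|\psi_{k}\rangle\langle\psi_{k}|_{A}=\sum_{k}|a_{k}|^{2}\cdot\tfrac{I}{2}=\tfrac{I}{2}.
\end{equation*}
Block $B$ is symmetric, so Definition 1 is met.

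The only genuinely nontrivial step is producing the MEB; once it is in hand, the duplication trick makes every single-qubit marginal of $|\Phi\rangle$ depend only on the moduli $|a_{k}|^{2}$ via orthogonality in the un-traced register, which is exactly what masking requires. A small bookkeeping remark: the corollary tacitly assumes $n\ge 2$, since at $n=1$ the notion of a single-qubit marginal of an MEB element is vacuous and, consistently, the no-masking theorem rules out a universal masker in the bipartite case.
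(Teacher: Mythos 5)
Your proposal is correct and follows essentially the same route the paper intends: enumerate a maximum entangled basis $\{|\psi_{k}\rangle\}_{k=0}^{2^{n}-1}$ of the $n$-qubit register, map $|k\rangle\mapsto|\psi_{k}\rangle\otimes|\psi_{k}\rangle$ via Lemma \ref{lemma}, and kill the cross terms in each single-qubit marginal by orthogonality in the untraced block (the paper states the corollary by extrapolating its $n=2,3$ examples, and its later Theorem 2 with $d=2$, $m=2n$ covers the same claim). Your explicit GHZ-type construction $|\psi_{s,\epsilon}\rangle=\tfrac{1}{\sqrt{2}}\bigl(|s\,0\rangle+(-1)^{\epsilon}|\bar{s}\,1\rangle\bigr)$ supplies the general-$n$ MEB that the paper leaves implicit, and your remark that $n\geq 2$ is needed is a correct and worthwhile clarification.
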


\section{multi-qudit masker}

With the increase of the dimensions of the local systems involved in the masking process, the range of maskable states expands. To move forward, we propose the first general conclusion.

\begin{theorem}
For any positive integer $2\leq h\leq d^{2}$, it's possible masking all the states of $h$-level system $\mathbb{C}^{h}$ into 4-qudit systems. i.e.,$\mathbb{C}^{h}\rightarrow\mathbb{C}^{d}\otimes \mathbb{C}^{d}\otimes\mathbb{C}^{d}\otimes\mathbb{C}^{d}$.
\end{theorem}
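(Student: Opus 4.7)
The strategy is to generalize the four-qubit construction of Example~1 by replacing the Bell basis of $\mathbb{C}^{2}\otimes\mathbb{C}^{2}$ with a maximally entangled basis (MEB) of $\mathbb{C}^{d}\otimes\mathbb{C}^{d}$, which contains exactly $d^{2}$ elements; this is precisely the origin of the upper bound $h\leq d^{2}$.

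First I would exhibit an explicit MEB $\{\ket{\Phi_k}\}_{k=0}^{d^{2}-1}$ of $(\mathbb{C}^{d})^{\otimes 2}$ in the sense of Definition~\ref{max}, for instance the generalized Bell states
\begin{equation*}
\ket{\Phi_{mn}}=\frac{1}{\sqrt{d}}\sum_{j=0}^{d-1}e^{2\pi i jn/d}\ket{j}\ket{(j+m)\bmod d},
\end{equation*}
with $m,n\in\{0,\dots,d-1\}$ and a fixed bijection $k\leftrightarrow(m,n)$. Each such vector has maximally mixed marginals $I/d$, and the $d^{2}$ of them together form an orthonormal basis. For $h\le d^{2}$ I would then define the candidate masker on computational-basis states of $\mathbb{C}^{h}$ by $\ket{k}\mapsto \ket{\Phi_k}_{12}\otimes\ket{\Phi_k}_{34}$ for $k=0,\dots,h-1$, and invoke Lemma~\ref{lemma} to extend it to a unitary: the input inner products satisfy $\langle k|k'\rangle=\delta_{kk'}$, while the output inner products factor as $\langle\Phi_{k}|\Phi_{k'}\rangle_{12}\langle\Phi_{k}|\Phi_{k'}\rangle_{34}=\delta_{kk'}$ by orthonormality of the MEB, so the hypothesis of the lemma is met.

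For an arbitrary input $\ket{\phi}=\sum_{k=0}^{h-1}a_k\ket{k}$ with $\sum|a_k|^{2}=1$, the masked output
\begin{equation*}
\ket{\Psi}=\sum_{k=0}^{h-1} a_k\,\ket{\Phi_k}_{12}\otimes\ket{\Phi_k}_{34}
\end{equation*}
must have every one of $\rho_{1},\rho_{2},\rho_{3},\rho_{4}$ equal to $I/d$. To obtain $\rho_{1}$ I first trace out parties $3,4$: since $\langle\Phi_{k'}|\Phi_k\rangle_{34}=\delta_{kk'}$, all cross-terms drop out and $\rho_{12}=\sum_{k}|a_k|^{2}\,\ket{\Phi_k}\bra{\Phi_k}$. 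Tracing out party $2$ and applying property~(i) of Definition~\ref{max} to each $\ket{\Phi_k}$ gives $\rho_{1}=\sum_k|a_k|^{2}(I/d)=I/d$, and the identical argument---using the symmetry of the construction between the pairs $\{1,2\}$ and $\{3,4\}$---delivers the other three reductions.

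The only point worth flagging is that the ``diagonal copy'' structure $\ket{\Phi_k}\otimes\ket{\Phi_k}$ is essential: it is what makes the partial trace over $\{3,4\}$ act as perfect dephasing in the MEB basis, after which Definition~\ref{max}(i) does the remaining work. Given the existence of a $d^{2}$-element MEB of $\mathbb{C}^{d}\otimes\mathbb{C}^{d}$ and Lemma~\ref{lemma}, no serious obstruction arises; the bound $h\le d^{2}$ emerges naturally from the dimension of this MEB.
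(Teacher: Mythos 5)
Your proposal is correct and follows essentially the same route as the paper: mapping $\ket{k}\mapsto\ket{\Phi_k}_{12}\otimes\ket{\Phi_k}_{34}$ with $\{\ket{\Phi_k}\}$ the generalized Bell MEB of $\mathbb{C}^{d}\otimes\mathbb{C}^{d}$, invoking Lemma~\ref{lemma}, and killing cross-terms via orthonormality before using the maximally mixed marginals. Your write-up is in fact slightly tidier than the paper's (correct $1/\sqrt{d}$ normalization and an explicit check of the inner-product hypothesis), but the underlying argument is the same.
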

\begin{proof}
For an arbitrary unknown quantum state of the $d^{2}$-level system $\mathbb{C}^{d^{2}}$ namely
\begin{equation}
\begin{aligned}
|\psi\rangle=\sum_{k=0}^{d^{2}-1}a_{k}|k\rangle,
\end{aligned}
\end{equation}
where $\sum_{k=0}^{d^{2}-1}|a_{k}|=1$. According to lemma $\ref{lemma}$, there exists a unitary can complete the following transformation
\begin{equation}
\begin{aligned}
|k\rangle\rightarrow |\psi_{k}\rangle\otimes|\psi_{k}\rangle.
\end{aligned}
\end{equation}
 Thus the total states in systems $\mathbb{C}^{d}\otimes \mathbb{C}^{d}\otimes\mathbb{C}^{d}\otimes\mathbb{C}^{d}$ can be written as
\begin{equation}\label{G}
\begin{aligned}
|\Psi\rangle=\sum_{k=0}^{d^{2}-1}a_{k}|\psi_{k}\rangle|\psi_{k}\rangle,
\end{aligned}
\end{equation}
where
\begin{equation}\label{h}
\begin{aligned}
|\psi_{k}\rangle =\frac{1}{d}\sum_{j=0}^{d-1}\omega^{j(k\mod d)}|j\rangle|(j+[k/d])\mod d\rangle .
\end{aligned}
\end{equation}
for $\omega=e^{\frac{2\pi i}{d}}$. It's distinct that $\{|\psi_{k}\rangle\}$ is a MEB of the 2-qudit systems. The reductions to all local systems are derived as
\begin{equation}
\begin{aligned}
\rho_{A_{l}}=&{\rm Tr}_{A_{l}^{c}}(|\Psi\rangle\langle\Psi|)\\
        =&\sum_{k=0}^{d^{2}-1}\sum_{t=0}^{d^{2}-1}|a_{k,t}|^{2}{\rm Tr}_{A_{l}^{c}}|\psi_{k}\rangle\langle\psi_{t}|(\langle\psi_{k}|\psi_{t}\rangle)\delta_{kt}\\
        =&\sum_{k=0}^{d^{2}-1}|a_{k}|^{2}{\rm Tr}_{A_{l}^{c}}|\psi_{k}\rangle\langle\psi_{k}|(\langle\psi_{k}|\psi_{k}\rangle)\\
        =&\sum_{k=0}^{d^{2}-1}|a_{k}|^{2}\frac{I}{d}\\
        =&\frac{I}{d}.
\end{aligned}
\end{equation}
\end{proof}

Then, we provide the  corresponding controlled operation. For arbitrary unknown states $|\psi_{0}\rangle \in\mathbb{C}^{d^{2}}$
\begin{equation}
\begin{aligned}
|\psi_{0}\rangle=\sum_{k=0}^{d^{2}-1}|k\rangle,
\end{aligned}
\end{equation}
First, the unitary evolution goes as
\begin{equation}
\begin{aligned}
|k\rangle\rightarrow |k \mod d\rangle|[k/d]\rangle,
\end{aligned}
\end{equation}
where  $[k/d]$ denotes the quotient of $k$ divided by  $d$. Next, by adding two auxiliary systems, the total state holds the form as follow
\begin{equation}
\begin{aligned}
|\psi_{1}\rangle=\sum_{k=0}^{d^{2}-1}a_{k}|k \mod d\rangle|[k/d]\rangle|00\rangle.
\end{aligned}
\end{equation}
Second, we obtain the new total states by  performing the controlled operation $C_{13}$
\begin{equation}
\begin{aligned}
|\psi_{2}\rangle=\sum_{k=0}^{d^{2}-1}a_{k}|k \mod d\rangle|[k/d]\rangle|k \mod d\rangle|0\rangle,
\end{aligned}
\end{equation}
where
\begin{equation}
\begin{aligned}
C_{13}=\sum_{k=0}^{d^{2}-1}|k \mod d\rangle\langle k \mod d|\otimes U^{(k \mod d)},
\end{aligned}
\end{equation}
with
\begin{equation}
\begin{aligned}
U=\sum_{j=0}^{d-1}|(j+1)\mod d\rangle\langle j|.
\end{aligned}
\end{equation}
Third, we perform the controlled operation $C_{24}$ on  $|\psi_{2}\rangle$ and obtain the states as follow
\begin{equation}
\begin{aligned}
|\psi_{3}\rangle=\sum_{k=0}^{d^{2}-1}a_{k}|k \mod d\rangle|[k/d]\rangle|k \mod d\rangle|[k/d]\rangle,
\end{aligned}
\end{equation}
where
\begin{equation}
\begin{aligned}
C_{24}=\sum_{k=0}^{d^{2}-1}|[k/d]\rangle\langle [k/d]|\otimes U^{[k/d]}.
\end{aligned}
\end{equation}

We send 1 and 3 particles  through the  general Hadamard gate, namely
\begin{equation}
\begin{aligned}
|j\rangle\rightarrow \frac{1}{\sqrt{d}}\sum_{j=0}^{d-1}\omega^{j}|j\rangle.
\end{aligned}
\end{equation}

Finally, two controlled not gates given as follow are performed on  $|\psi_{3}\rangle$ subsequently
\begin{equation}
\begin{aligned}
&C_{12}=\sum_{j=0}^{d-1}|j\rangle\langle j|\otimes U^{j}\\
&C_{34}=\sum_{i=0}^{d-1}|i\rangle\langle i|\otimes U^{i}.\\
\end{aligned}
\end{equation}

The final total states become
\begin{equation}
\begin{aligned}
|\psi_{4}\rangle=&\frac{1}{d}\sum_{k=0}^{d^{2}-1}\sum_{i=o}^{d-1}\sum_{j=o}^{d-1}a_{k}\omega^{(i+j)(k\mod d)}|j\rangle|(j\\
&+[k/d])\mod d\rangle|i\rangle|(i+[k/d])\mod d\rangle.
\end{aligned}
\end{equation}
It should note that $|\psi_{4}\rangle$ is equivalent to  (\ref{G}).

It can be deduced from the quantum masking in multi-qubit systems that as the number of local systems participating in masking tasks increases, all quantum states in the higher-dimensional system will be deterministically masked. Thus a more
general conclusion can be obtained.

\begin{theorem}
For all the states of the $w$-level system $\mathbb{C}^{w}$ ,
\begin{equation}
\begin{aligned}
|\phi\rangle=\sum_{k=0}^{w-1}a_{k}|k\rangle,
\end{aligned}
\end{equation}
If we mask all the states into m-qudit $(m\geq4)$ systems, the supremum of the level of the system in which all the masked states locate can be derived as
\begin{equation}
\begin{aligned}
w\leq d^{\lfloor\frac{m}{2}\rfloor}.
\end{aligned}
\end{equation}
\end{theorem}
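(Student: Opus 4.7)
The plan is to establish the bound $w\le d^{\lfloor m/2\rfloor}$ by an explicit construction that generalizes Theorem~1: split the $m$ qudits into two halves and use a maximum entangled basis on each half, paralleling exactly the $|\psi_k\rangle\otimes|\psi_k\rangle$ device used for $m=4$.

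First, I would partition the set of parties $\{A_1,\ldots,A_m\}$ into a left block $L$ of size $m_L=\lfloor m/2\rfloor$ and a right block $R$ of size $m_R=\lceil m/2\rceil$. By Definition~\ref{max} (and the generalized Bell construction already displayed in \eqref{h} for the two-qudit case, which iterates naturally to $n$ qudits via shift and clock operators) there exist MEBs $\{|\psi^L_k\rangle\}_{k=0}^{d^{m_L}-1}$ of $(\mathbb{C}^d)^{\otimes m_L}$ and $\{|\psi^R_k\rangle\}_{k=0}^{d^{m_R}-1}$ of $(\mathbb{C}^d)^{\otimes m_R}$. Since $w\le d^{\lfloor m/2\rfloor}\le d^{m_L}\le d^{m_R}$, one may pick the first $w$ elements from each basis and define the encoding
\begin{equation}
|k\rangle\;\longrightarrow\;|\Psi_k\rangle:=|\psi^L_k\rangle\otimes|\psi^R_k\rangle,\qquad k=0,1,\ldots,w-1 .
\end{equation}
Because $\langle\psi^L_k|\psi^L_t\rangle=\langle\psi^R_k|\psi^R_t\rangle=\delta_{kt}$, the product states are orthonormal, so by Lemma~\ref{lemma} a unitary $U$ on $(\mathbb{C}^d)^{\otimes m}$ realizes this map deterministically.

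Next I would verify the masking property. For an arbitrary input $|\phi\rangle=\sum_k a_k|k\rangle$, the output is $|\Psi\rangle=\sum_k a_k|\psi^L_k\rangle|\psi^R_k\rangle$. For a party $A_l\in L$, tracing out all other sites factorizes across $L$ and $R$ and yields
\begin{equation}
\rho_{A_l}=\sum_{k,t}a_k\bar a_t\,\mathrm{Tr}_{L\setminus\{A_l\}}\!\bigl(|\psi^L_k\rangle\langle\psi^L_t|\bigr)\,\langle\psi^R_t|\psi^R_k\rangle=\sum_k|a_k|^{2}\,\frac{I}{d}=\frac{I}{d},
\end{equation}
where the orthonormality of $\{|\psi^R_k\rangle\}$ collapses the cross terms and the MEB property on $L$ gives the $I/d$ reduction. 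The symmetric calculation handles $A_l\in R$. Hence every single-party marginal is $I/d$ independently of the coefficients $a_k$, which is exactly the masking condition of Definition~1.

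The main obstacle, and the step that determines the tightness of the bound, is guaranteeing that MEBs of size $d^{m_L}$ and $d^{m_R}$ actually exist on the two halves; without this the construction has no fuel. I would address it by iterating the explicit MEB formula~\eqref{h}: starting from the two-qudit case, one builds an $n$-qudit MEB of full size $d^n$ by taking products of generalized Bell states on disjoint pairs (for even $n$) or by appending one extra qudit acted on by the clock operator (for odd $n$), so both $L$ and $R$ admit MEBs of the required cardinality. Finally, the balance $m_L=\lfloor m/2\rfloor$ is optimal within this construction because $w$ is capped by the smaller of the two MEB sizes, $\min(d^{m_L},d^{m_R})=d^{\lfloor m/2\rfloor}$, which gives the stated supremum and completes the theorem.
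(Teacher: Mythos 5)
Your proposal is correct and follows essentially the same route as the paper: split the $m$ qudits into blocks of size $\lfloor m/2\rfloor$ and $\lceil m/2\rceil$, encode $|k\rangle\mapsto|\psi^L_k\rangle\otimes|\psi^R_k\rangle$ using subsets of MEBs on each block via Lemma~1, and use orthonormality of one block's states to kill the cross terms so every single-party marginal is $I/d$. Your extra care in justifying the existence of full-size MEBs on each half (via products of generalized Bell states, or GHZ-type bases for odd block sizes) is a detail the paper leaves implicit, but it does not change the argument.
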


\begin{proof}
Without loss of generality, we divide the m-qudit system into two parts. Let $\{|\psi_{w}\rangle\},\{|\upsilon_{w}\rangle\}$ be the subsets of the MEBs of $(\mathbb{C}^{d})^{\otimes \lfloor\frac{m}{2}\rfloor}$ and $(\mathbb{C}^{d})^{\otimes \lceil\frac{m}{2}\rceil}$ respectively. According to lemma  $\ref{lemma}$, the masking process can be defined by
\begin{equation}
\begin{aligned}
|k\rangle\rightarrow |\psi_{k}\rangle|\upsilon_{k}\rangle.
\end{aligned}
\end{equation}
where $k=0,1,2,\cdots w$. That is,
\begin{equation}
\begin{aligned}
U:|\phi\rangle=\sum_{k=0}^{w-1}a_{k}|k\rangle\rightarrow |\Phi\rangle=\sum_{k=0}^{w-1}a_{k}|\psi_{k}\rangle|\upsilon_{k}\rangle
\end{aligned}
\end{equation}
For the final state $|\Phi\rangle$, the reductions to the local systems of the first $\lfloor\frac{m}{2}\rfloor$ systems are
\begin{equation}
\begin{aligned}
\rho_{A_{l}}=&{\rm Tr}_{\widehat{A_{l}}}(|\Phi\rangle\langle\Phi|)\\
        =&\sum_{k=0}^{w-1}\sum_{t=0}^{w-1}|a_{k,t}|^{2}{\rm Tr}_{\widehat{A_{l}}}|\psi_{k}\rangle\langle\psi_{t}|(\langle|\upsilon_{k}|\upsilon_{t}\rangle)\delta_{kt}\\
        =&\sum_{k=0}^{w-1}|a_{k}|^{2}\rm{Tr}_{\widehat{A_{l}}}|\psi_{k}\rangle\langle\psi_{k}|(\langle|\upsilon_{k}|\upsilon_{k}\rangle)\\
        =&\sum_{k=0}^{w-1}|a_{k}|^{2}\frac{I}{d}\\
        =&\frac{I}{d}.
\end{aligned}
\end{equation}
Where $l\in\{1,2,\cdots,\lfloor\frac{m}{2}\rfloor\}$ and $\widehat{A_{l}}$ denotes the set $ \{A_{1}, A_{2},\cdots,A_{\lfloor\frac{m}{2}\rfloor}\}\setminus\{A_{l}\}$ and $\lfloor . \rfloor (\lceil.\rceil) $ denotes the integer part.

The same result can be obtained for the last $\lceil\frac{m}{2}\rceil$-qudit systems namely $\rho_{A_{\tilde{l}}}=\frac{I}{d}$ for $\tilde{l}\in\{\lceil\frac{m}{2}\rceil, \lceil\frac{m}{2}\rceil+1,\cdots, m\}$. Thus we observe no information about  $|\phi\rangle$ from all the local systems of the $m$-qudit system.
\end{proof}

Specially, an $((m, w, 2))_{d}$ QECC has the quantum Singleton bound\cite{ZS}
\begin{equation}
\begin{aligned}
w\leq d^{m-2},
\end{aligned}
\end{equation}
Where $m\geq4$. It's easy to verify that $d^{\lfloor\frac{m}{2}\rfloor}\leq d^{m-2}$, that is, the maskable bound of $k$ is tighter than the quantum Singleton bound.

This conclusion implies that, in the quantum information masking process, when the dimension of each local system $d$  and  the level $w$ of the system that all the masked states located in  are determined and $w>d$, it can be seen that at least $2t$ qudit systems need to be employed for $ t =\lceil\log_{d}w\rceil$.

\section{Conclusion}
In this paper, we study the structure of the quantum  multipartite masker. Noticing all the known masking schemes, each local system's dimension is not less than the level of the system which all the states to be masked located in. Here we present some schemes to show it's possible masking $k$-level states into a $m$-partite system with local dimension $d$ for $k\geq d,m\geq4$ and obtain a tighter bound of $k$ than the famous quantum Singleton bound of QECC. We also provide the controlled operation in four-partite systems to unveil the mystery of quantum information masking.

Our conclusion provides hands for the question ``can all quantum states of level $d$ be hidden into
tripartite quantum system $\mathbb{C}^{d}\otimes \mathbb{C}^{d}\otimes\mathbb{C}^{d}$ with $n < d$ or not?"  raised in Re.\cite{MY} . By the way, a new question is raised: can our conclusion be generalized to $k$-uniform scenario?
\begin{acknowledgments}
This work was support by the Scientific Research and Innovation Project of Tianjin (No. 2021YJSB072) ,  National Natural Science Foundation of China (Grants Nos. 11675119 ,11575125,12275136, 11875167 and 12075001 ). and the Fundamental Research Funds for the Central Universities (Grant No. 3072022TS2503).
\end{acknowledgments}


\begin{thebibliography}{99}
\bibitem{WW} W. K. Wootters and W. H. Zurek, A single quantum cannot be cloned, Nature (London) {\bf{299}}, 802 (1982).
\bibitem{D}  D. Dieks, Phys. Lett, Communication by EPR devices, {\bf{92}}, 271 (1982).
\bibitem{HY} H. P. Yuen, Phys. Lett, Amplification of quantum states and noiseless photon amplifiers {\bf{113}}, 405 (1986).
\bibitem{AS}A. K. Pati and S. L. Braunstein, Impossibility of deleting an unknown quantum state, Nature (London) {\bf{404}}, 164 (2000).
\bibitem{HCC}H. Barnum, C. M. Caves, C. A. Fuchs, R. Jozsa, and B. Schumacher, Noncommuting mixed states cannot be broadcast, Phys. Rev. Lett. {\bf{76}}, 2818 (1996).
\bibitem{AI} A. Kalev and I. Hen, No-Broadcasting theorem and its classical counterpart, Phys. Rev. Lett. {\bf{100}}, 210502 (2008).
\bibitem{MAM}M. Oszmaniec, A. Grudka, M. Horodecki, A. Wojcik, Creating a superposition of unknown quantum states, Phys. Rev. Lett. {\bf{116}}, 110403 (2016).
\bibitem{S} S. Bandyopadhyay, Impossibility of creating a superposition of unknown quantum states, Phys. Rev. A {\bf{102}}, 050202(R) (2020).
\bibitem{MFV}M. Doosti, F. Kianvash, V. Karimipour, Universal superposition of orthogonal states,  Phys. Rev. A {\bf{96}}, 052318 (2017).
\bibitem{SA}S. L. Braunstein and A. K. Pati, Quantum information cannot be completely hidden in correlations: implications for the black-hole information paradox, Phys. Rev. Lett. {\bf{98}}, 080502 (2007).
\bibitem{MVA}M. Hillery, V. Buzek, and A. Berthiaume, Quantum secret sharing, Phys. Rev. A {\bf{59}}, 1829 (1999).
\bibitem{RD} R. Cleve, D. Gottesman, and H.-K. Lo, How to Share a Quantum Secret? Phys. Rev. Lett. {\bf{83}}, 648 (1999).
\bibitem{ZYZ}Z. Zhang, Y. Li, Z. Man, Multiparty quantum secret sharing, Phys. Rev. A {\bf{71}}, 044301 (2005).
\bibitem{C}C. H. Bennett, Quantum cryptography Using any two nonorthogonal states, Phys. Rev. Lett. {\bf{68}}, 3121 (1992).
\bibitem{NG}N. Gisin, G. Ribordy, W. Tittel, and H. Zbinden, Quantum cryptography, Rev. Mod. Phys. {\bf{74}}, 145 (2002).
\bibitem{CG}C. H. Bennett, G. Brassard, C. Cr¡äepeau, R. Jozsa, A. Peres, and W. K. Wootters, Teleporting an unknown quantum state via dual classical and Einstein-Podolsky-Rosen channels, Phys. Rev. Lett. {\bf{70}}, 1895 (1993).
\bibitem{LZ} Y. H. Luo , H. S. Zhong, M. Erhard, X. L. Wang, $et~al$, Quantum Teleportation in High Dimensions, {\bf{123}}, 070505 (2019).
\bibitem{DP}D. Bouwmeester, Pan. J. W, K. Mattle, M. Eibl, H. Weinfurter, A. Zeilinger, Experimental quantum teleportation, Nature (London) {\bf{390}}, 575 (1997).
\bibitem{KA}K. Modi£¬A. K. Pati£¬Sen. De, Aditi, U. Sen, Masking Quantum Information is Impossible, Phys. Rev. Lett. {\bf{120}}, 230501 (2018).
\bibitem{D}D. Mayers, Unconditionally secure quantum bit commitment is impossible, Phys. Rev. Lett. {\bf{78}}, 3414 (1997).
\bibitem{HH}H. K. Lo and H. F. Chau, Is Quantum bit commitment really possible? Phys. Rev. Lett. {\bf{78}}, 3410 (1997).
\bibitem{MK} M.S.Li and K. Modi, Probabilistic and approximate masking of quantum information, Phys. Rev. A {\bf{102}}, 022418 (2020).
\bibitem{XBS} X. B. Liang, B. Li, S.M. Fei, H. Fan, Impossibility of masking a set of quantum states of nonzero measure, {\bf{101}}, 042321 (2020).
\bibitem{BS}B. Li, S. Jiang, X.B. Liang, X. Li-Jost, H. Fan, S.M. Fei, Deterministic versus probabilistic quantum information masking, {\bf{99}}, 052343 (2019).
\bibitem{DH} Feng. D, X. Y. Hu,  Masking quantum information on hyperdisks, Phys. Rev. A {\bf{102}}, 042404 (2020).
\bibitem{XB} X. B. Liang, B. Li, S.M. Fei, Complete characterization of qubit masking, {\bf{100}}, 030304(R) (2019).
\bibitem{ZXK} Z.H. Liu, X.B. Liang, K. Sun, Q. Li, Y. Meng, M. Yang, B. Li, J.L. Chen, J.S. Xu, C.F. Li, and G.C. Guo, Photonic implementation of quantum information masking, Phys. Rev. Lett. {\bf{126}}, 170505 (2021).
\bibitem{RZZ}R.Q.Zhang, Z.B. Hou, Z.H. Li, H.J. Zhu, G.Y. Xiang, C.F. Li, and G.C. Guo, Experimental masking of real quantum states, Phys. Rev. Applied {\bf{16}}, 024052 (2021).
\bibitem{ZHJ} H,J, Zhu, Hiding and masking quantum information in complex and real quantum mechanics, Phys. Rev. Research {\bf{3}}, 033176 (2021).
\bibitem{MY}M.S.Li, Y.L.Wang, Masking quantum information in multipartite scenario, Phys. Rev. A {\bf{98}}, 062306 (2018).
\bibitem{FML}F. Shi, M.S. Li, L. Chen, X. Zhang, $k$-uniform quantum information masking, Phys. Rev. A {\bf{104}}, 032601 (2021).
\bibitem{KZH}K.Y. Han, Z.H. Guo, H.X. Cao, Y.X. Du, C. Yang, Quantum multipartite maskers vs quantum error-correcting codes, Euro. Phys. Lett. {\bf{131}}, 30005 (2020).
\bibitem{WFC}W.M.Shang, J. Zhou, H. X. Meng,  F.L.Zhang, J.L.Chen, Qubit masking in multipartite qubit system, Mod. Phys. Lett. A, {\bf{36}}, 2150156, (2021).
\bibitem{WF} W.M.Shang, F.L.Zhang, J.L.Chen, Quantum information masking basing on quantum teleportation, arXiv:2103.03126, 2021.
\bibitem{LG}L.M. Duan, G.C. Guo, A probabilistic cloning machine for replicating two non-orthogonal states, Phys. Lett. A, {\bf{243}}, 5 (1998).
\bibitem{ZS} Z. Wang, S. X. Yu, H. Fan, C. H. Oh, Quantum error-correcting codes over mixed alphabets, Phys. Rev. A {\bf{88}}, 022328  (2013).
\end{thebibliography}

\end{document}